\def\ps@headings{%
\def\@oddhead{\mbox{}\scriptsize\rightmark \hfil \thepage}%
\def\@evenhead{\scriptsize\thepage \hfil\leftmark\mbox{}}%
\def\@oddfoot{}%
\def\@evenfoot{}}
\newcommand{\N}{\mathcal{N}}
\newtheorem{theorem}{\textbf{Theorem}}
\newtheorem{lemma}[theorem]{\textbf{Lemma}}
\newtheorem{definition}[theorem]{\textbf{Definition}}
\newcommand{\nix}[1]{}
\begin{document}
\title{\LARGE{Distributed Data Collection and Storage Systems for Collaborative Learning Vision
Sensor Devices with Applications to Pilgrimage}}

\author{Salah A. Aly \\Department  of Computer Science\\ Center of Research Excellence in Hajj and Omrah, HajjCore \\ Umm Al-Qura University,  Makkah,  KSA \\Email: salahaly@uqu.edu.sa

\thanks{This research is funded by the Center of Research Excellence in Hajj and Omrah at UQU, Makkah, KSA.}
}

 \maketitle

\begin{abstract}
This work presents novel distributed data collection systems and storage algorithms for collaborative learning wireless sensor networks (WSNs).
In a large  WSN, consider $n$ collaborative sensor  devices distributed randomly to acquire information and learn about a certain field. Such sensors have less power, small bandwidth, and short memory, and they might disappear from the network  after certain time of operations. The goal of this work  is to design efficient  strategies to learn about the field by collecting sensed data from these $n$ sensors with less computational overhead and efficient  storage encoding operations.

In this data collection system, we propose two distributed data storage algorithms (DSA's) to solve this problem with the means of network flooding and connectivity among sensor devices. In the first algorithm denoted, DSA-I, it's  assumed that the total number of nodes is known for each node in the network. We show that this algorithm is efficient in terms of the encoding/decoding operations. Furthermore, every node uses network flooding to disseminate its data throughout the network using mixing time approximately $O(n)$. In the second algorithm denoted, DSA-II, it's  assumed that the total number of nodes is not known for each learning sensor, hence dissemination of the data does not depend on the value of $n$. In this case we show that the encoding operations take $O(C\mu^2)$, where $\mu$ is the mean degree of the network graph and  $C$ is a system parameter. Performance of these two algorithms match the  derived theoretical results. Finally, we show how to deploy these algorithms for monitoring and measuring certain phenomenons in American-made camp tents located in Minna field in south-east side of Makkah.
\end{abstract}

\section{Introduction}\label{sec:intro}
 The field of information technology has    witnessed remarkable extensions especially after appearance of the world wide web two decades ago. In addition, this has been embarked by appearance of several communication networking branches, such as wireless sensor networks.
Wireless sensor networks (WSN's) consist of small devices (nodes)  with low CPU power,
small bandwidth, and limited memory. They can be deployed in isolated, tragedy, and obscured fields to monitor objects,
detect fires or floods, measure temperatures, transmit  media streams, and etc. They can also be used in areas  where human involvement is difficult to reach or it is danger for human being to be involved.
There has been extensive research work  on sensor networks to improve their
services, powers, and operations~\cite{stojmenovic05}.  They have taken much attention recently due to their varieties of  applications. Much research has been done in both academia and industry to increase their reliability, usage, and operations.

We consider  a model for large-scale wireless sensor networks where $n$ data collection and storage sensor  nodes are distributed uniformly and randomly. These $n$ nodes are deployed to
collect information  and transmit media streams (images, videos, texts) about a certain field. These $n$  sensor devices have a
short time-to-live, limited memory,  and might disappear from the network at
anytime. Also, the nodes do not know locations of the neighboring
nodes, and they do not maintain routing tables to forward messages. We assume
that the $n$ sensing and data collection nodes generate independent packets that can be classified
as initial or update packets sent at an arbitrary time. A packet  initiated from a node $u$ contains its $ID_u$, time-to-live parameter, and sensed data. In addition, ever storage node $u$ has $M$ buffer size that can be divided into $m$ small buffers to save other neighbors' data. Every storage node decides randomly and independently from which it will accept or reject packets. Also, a packet will be discarded once it travels through the network $O(n)$.

\begin{figure}
  \begin{center}
  \includegraphics[width=8.8cm,height=5.2cm]{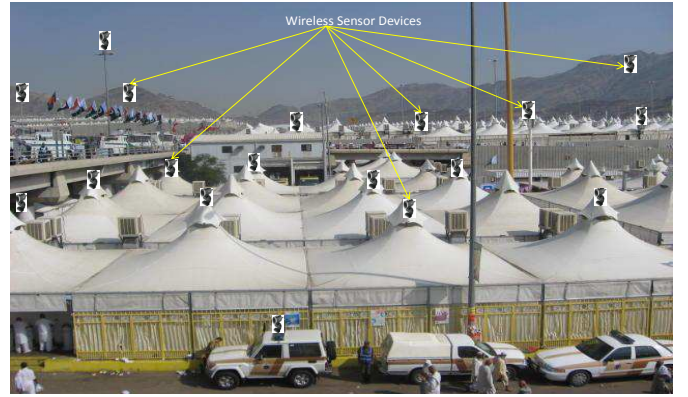}

  \caption{A wireless sensor network consists of various small devices with limited CPU power, small memory and bandwidth. Collaborative Sensor nodes are distributed randomly to monitor, collect data,  and learn about Minna field in the east of Makkah. Approximately 50.000 camp tents are located in Minna to accommodate 3-5 million people for 4-8 days during pilgrimage, according to  2010 KSA statistics.}\label{fig:field1}
  \end{center}
\end{figure}

The goal of this work is to develop an efficient method to randomly distribute and collect  information from $n$ sensors to all $n$ storage nodes. In this case, a data collector with a high computational power can query any $(1+\epsilon)n/m$ storage nodes for $\epsilon >0$, and easily
retrieve information about the $n$ sensor nodes with a high probability. Other versions of this problem has been solved by using coding in a centralized way (e.g. Fountain
codes, MDS and linear codes) by adding some redundancy, where a node can send its data to a pre-selected set of other nodes in the network~\cite{dimakis06b,dimakis08,lin07a,aly11a}. Over a distributed random network,  this
is unreliable since we still need to find a strategy to distribute the
information from the sources to a set of arbitrary storage nodes. Hence, a
decentralized way solution is needed where the data collector and storage nodes are distributed randomly
and independently. Therefore, the considered problem  is a network storage problem rather
than a network transmission problem. The later problem assumes that channel coding and modulation theory are used to handle the transmission for a source to a destination.  The former problem requires  distributed networking storage algorithms to assure protection of information against node failures or disappearance. It is assumed that all nodes trust each other data, and attackers are unable to break the nodes transitions.

\begin{figure}
  \begin{center}
  \includegraphics[width=8.8cm,height=5.2cm]{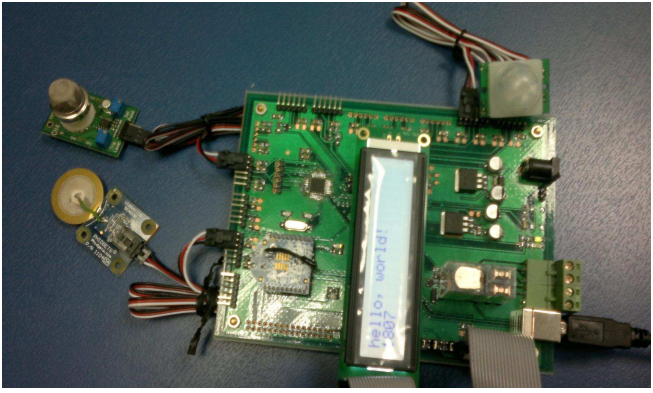}

  \caption{A wireless sensor devices equipped with several sensor components to measure temperature, gas, pollution, and co2.}\label{fig:sensors}
  \end{center}
\end{figure}

The motivations for this work are that:
\begin{compactenum}[i)]
\item
We demonstrate a realistic model for WSN's, where nodes are distributed randomly with limited power and memory.
\item The encoding and decoding operations are done linearly.
\item
Querying only $(1+\epsilon)n/m$ a subset of the network reveals information about all nodes.

\item
The proposed storage algorithms have less computational complexity in comparison to the related work shown in Section~\ref{sec:relatedwork}.
\end{compactenum}

This work is organized as follows. In Section~\ref{sec:relatedwork} we present a background and  short survey of the related work. In Section~\ref{sec:model} we
introduce the network model. In Sections~\ref{sec:algDSA-I} and~\ref{sec:algDSA-II} we propose two storage algorithms and provide their analysis in Sections~\ref{sec:analysis} and~\ref{sec:analysis2}, respectively. In
Section~\ref{sec:simulation} we present simulation studies of the proposed
algorithms, and the work is concluded in Section~\ref{sec:conclusion}.

\begin{figure}
\begin{center}
\includegraphics[scale=0.7]{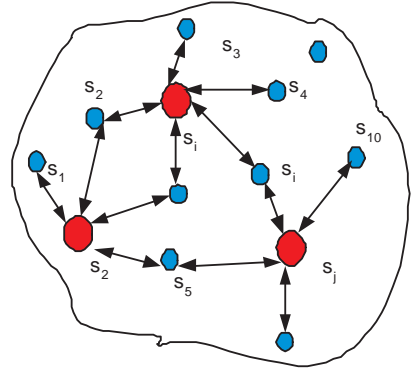}
\caption{A WSN with $n$ nodes arbitrary and randomly distributed in a field. A node $s_i$ determines its degree $d(s_i)$ by sending a flooding message to the neighboring nodes. }
\label{fig:wsn1}
\end{center}
\end{figure}
\section{Network Model and Assumptions}\label{sec:model}

In this section we present the  network model and problem definition. Consider a wireless sensor network $\N$ with $n$ sensor nodes that are uniformly distributed at random in a region $\mathcal{A}=[L,L]^2$ for some integer $L\geq 1$. The network model $\N$  can be considered as an abstract graph $G=(V,E)$ with  a set of  nodes $V$ and a set of edges $E$. The set $V$ represents the sensors $S=\{s_1,s_2,\ldots,s_n\}$ that will measure information about a specific field. Also, $E$ represents a set of connections (links) between the sensors $S$. Two arbitrary sensors $s_i$ and $s_j$ are connected if they are in each other transmission range.

We ensure that the network is dense, meaning with high probability there are no isolated nodes. Let $r>0$ be a fraction, we say that two nodes $u$ and $v$ in $V$ are connected in $G$ if and only if the distance between them is bounded by the design parameter $r$, i.e. $0<d(u,v) \leq r$. Put differently, let $z$ be a random variable represents existence of an edge between any two arbitrary nodes $u$ and $v$. Then

\begin{eqnarray}z=\left\{ \begin{array}{cl}
1 & \mbox{ $d(u,v) \leq r$ }\\
0 & otherwise\\
\end{array}\right.
\end{eqnarray}

One can guarantee such condition by assuming that the radius  $r \geq O(\frac{1}{n^2})$.

\subsection{Assumptions}

We have the following assumptions about the network model $\N$:
\begin{compactenum}[i)]
\item Let $S=\{s_1,\ldots,s_n\}$ be a set of sensing nodes that are
    distributed randomly and uniformly in a field. Also, they are the set
    of storage nodes. So, this assumption differentiate between our work
    and the problems considered in~\cite{aly08e,lin07a}.
\item Every node does not maintain routing or geographic tables, and the network topology is not known.  Every node $s_i$ can send
    a flooding message to the neighboring nodes. Also,
    every node $s_i$ can detect the total number of neighbors by sending a
    simple flooding query message, and whoever replies to this message
    will be a neighbor of this node. Therefore, our work is more general
    and different from the work done in~\cite{dimakis07,dimakis05}. The
    degree $d(u)$ of this node is the total number of neighbors with a
    direct connection.

\item  Every node has  a buffer of size $M$ and this buffer can be divided
    into smaller buffers, each of size $c$, such that $m=\lfloor M/c
    \rfloor$. Hence, all nodes have the same number of buffers. Also, the first buffer of a node $u$ is reserved for its own sensing data.

\item Every node $s_i$ prepares a packet $packet_{s_i}$ with its ID,
    sensed data $x_{s_i}$, counter $c(x_{s_i})$, and a flag that is set to zero or one.
\begin{eqnarray}
packet_{s_i}=(ID_{s_i},x_{s_i},c(x_i), flag)
\end{eqnarray}
The flag is set to zero when the sensors initiate data for the first time, otherwise it will be set to one for data update.
\item We will consider two different types of packets: initialization and
    update packets. One can consider these two cases by using a flag that
    takes the values zero and one. If the source node sends a packet and
    the flag is set to zero, then it will be considered as an
    initialization packet. Otherwise, it will be considered as an update
    packet. The packets sent from all sources at the beginning of sensing
    phase are considered initialization packets.

\item Every node draws  a degree $d_u$ from a degree distribution
    $\Omega$. If a node decided to accept a packet, it will also decide on
    which buffer it will be stored.

\end{compactenum}

When  a node $s_i$ receivers a packet, it will decide to either reject or accept it with  a certain probability.\\

\begin{figure}
\begin{center}
\scalebox{0.7}{\includegraphics[width=8cm,height=3.5cm]{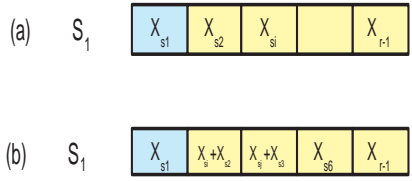}}
\caption{Every node $s_i$ has a buffer of size $M$ that  is divided into $m$ small buffers. The node $s_i$ decides  with a certain probability whether to accept or reject a data $x_{s_j}$ and where to save it in one of its buffers. }
\label{fig:wsn1}
\end{center}
\end{figure}

\bigskip
\section{Distributed Storage Algorithms}\label{sec:algDSA-I}
In this section we will present a networked distributed storage algorithm
for wireless sensor networks and study its encoding and decoding operations. Other previous algorithms
assumed that $k$ source nodes disseminate
their sensed data throughout a network with $n$ storage nodes using the
means of Fountain codes and random walks. However, in this work we
generalize this scenario where  a set of $n$ sources disseminate their
data to a set of  $n$ storage nodes. Also,  in this proposed algorithm we
use properties of wireless sensor networks such as broadcasting and
flooding.

\subsection{Encoding Operations}
We present a distributed storage  algorithm (DSA-I) for wireless
sensor networks. DSA-I algorithm consists of three main steps: Initialization,
encoding/flooding, and storage phases. Each phase can be described as
follows.

\begin{enumerate}[I)]
\item \textbf{Initialization Phase:} Every node $s_i$ in $S$ has an
    $ID_{s_i}$ and reading (sensing) data $x_{s_i}$. The node $s_i$ in the
    initialization phase prepares a $packet_{s_i}$ along with its
    info, a counter $c(x_{s_i})$ that determines the maximum number of
    hops that will receive $x_{s_i}$, and a flag that is set to zero.
    We ensure that every message $x_{s_i}$ will have it is own
    threshold value $c(x_{s_i})$ set by the sender $s_i$ based on the
     set of neighbors $\N(s_i)$. This value will depend on the degree $d_(s_i)$. If the node $s_i$ has a few
    neighbors, then $c(x_{s_i})$ will be large. Also, a node with
    large number of neighbors will choose a small counter
    $c(x_{s_i})$. This means that every node will decide its own
    counter.
\begin{eqnarray}
packet_{s_i}=(ID_{s_i}, x_{s_i},c(x_{s_i}), flag)
\end{eqnarray}

The node $s_i$ broadcasts this packet to all neighboring nodes $\N(s_i)$.

\item \textbf{Encoding and Flooding Phase:}
\begin{itemize}
\item After the flooding phase, every node $u$ receiving the
    $packet_{s_i}$ will accept the data $x_{s_i}$ with probability
    one and will add this data to its buffer $y$.
\begin{eqnarray}
y_u^+=y_u^- \oplus x_{s_i}.
\end{eqnarray}
\item The node $u$ will decrease the counter by one as
    \begin{eqnarray}c(x_i)=c(x_i)-1.\end{eqnarray}
\item The node $u$ will select a set of neighbors that did not
    receiver the message $x_{s_i}$ and it will send this message
    using multicasting.
\item For an arbitrary node $v$ that receives the message from
    $u$, it will check if the $x_{s_i}$ has been received before,
    if yes, then it will discard it. If not, then it will run a
    probability distributed whether to accept or reject it. If
    accepted, then it will add the data to its buffer $
    y_v^+=y_v^- \oplus x_{s_i} $ and will decrease the counter
    $c(x_i)=c(x_i)-1.$

\item The node $v$ will check if the counter is zero, otherwise it will decrease it and send this message to the neighboring nodes that did not receive it using multicasting.

\end{itemize}

\item \textbf{Storage Phase:} Every node will maintain its own buffer
    by storing a copy of its data and other nodes' data. Also, a node
    will store a list of nodes ID's of the packets that reached it.
    After all nodes receive, send and storage their own and
    neighboring data, every node will be able to maintain a buffer
    with some data of the network nodes.
\end{enumerate}

\begin{algorithm}[t!]
\SetLine%
\KwIn{A sensor network with $S=\{s_1,\ldots,s_n\}$ source nodes,  $n$ source packets $x_{s_i},\ldots,x_{s_n}$ and a positive constant $c(s_i)$.}%
\KwOut{storage buffers $y_1,y_2,\ldots,y_n$ for all sensors $S$.} %
\ForEach{node $u=1:n$}%
    {
    Generate $d_c(u)$ according to $\Omega_{is}(d)$ (or $\Omega_{rs}(d)$ and a set of neighbors $\N(u)$ using flooding.;%
    }%
\ForEach{source node $s_i, i=1:n$}%
    {
    Generate header of $x_{s_i}$ and $token=0$\;%
    Set counter $c(x_{s_i})=\lfloor n/d(s_i)\rfloor$\;%
   Flood $x_{s_i}$ to all $\mathcal{N}(s_i)$ uniformly at random, Send $x_{s_i}$ to $u \in \mathcal{N}(s_i)$ \;%

 with probability 1,  $y_u$ = $y_u\oplus x_{s_i}$\;%
 Put $x_{s_i}$ into $u$'s forward queue\;%
    $c(x_{s_i})=c(x_{s_i})-1$\;%
    }%

\While{source packets remaining}%
    {
    \ForEach{node $u$ receives packets before current round}%
        {
        Choose $v\in \mathcal{N}(u)$ uniformly at random\;%
        Send  packet $x_{s_i}$ in $u$'s forward queue to $v$\;%
        \uIf{$v$ receives $x_{s_i}$ for the first time}%
            {
            coin = rand(1)\;
				{flip a coin to accept or reject a packet} \;

            \If{$\mbox{coin}\leq \frac{1}{d_c(v)}$}%
                {
                $y_v$ = $y_v\oplus x_{s_i}$\;%
                Put $x_{s_i}$ into $v$'s forward queue\;%
            $c(x_{s_i})=c(x_{s_i})-1$%
                }%
            }%
            \uElseIf{$c(x_{s_i})\geq 1$}%
                {
                Put $x_{s_i}$ into $v$'s forward queue\;%
                $c(x_{s_i})=c(x_{s_i})-1$\;%
                }
            \Else
                {
                Discard $x_{s_i}$\;
                Hence $C(s_i)=1$ or no node to send to.
                }%
            }%
        }%

\mbox{}\\

\caption{ DSA-I Algorithm: Distributed storage algorithm  for a WSN where
the data is disseminated using multicasting and flooding to all
neighbors.} \label{alg:DSA-I}
\end{algorithm}

\subsection{Decoding Operations}

The stored data can be recovered by querying a number of nodes from the network. Let $n$ be the total number of alive nodes; assume that every node
has $m$ buffers such that $\lfloor M/c \rfloor$ is the number of buffers,
where $c$ is  a small buffer size, and $M$ is total buffer size by a node . Then the
data collector needs to query at least $(1+\epsilon)n/m$ nodes in order to
retrieve the information about the $n$ variables.

\bigskip
\section{DSA-I Analysis}\label{sec:analysis}
We shall provide analysis for the DSA-I algorithm shown in the previous section. The main idea is to utilize flooding and the node degree of each node to disseminate the sensed data from sensors throughout the network. We note that nodes with large degree will have  smaller counters in their packets such that their packets will travel for minimal number of neighbors. Also, nodes with smaller degree will have larger counters such that their packets will be disseminated to many neighbors as possible.

The following lemma establishes  the number of hobs (steps) that every packet will travel in the network.
\begin{lemma}\label{lem:onepacket}
On average with a high probability,  the total number of steps for one packet originated by a node $u$ in one branch in DSA-I is given by
\begin{eqnarray}
O(n/ \mu).
\end{eqnarray}

\end{lemma}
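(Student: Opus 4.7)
The plan is to read the counter off directly from the protocol and then convert $d(u)$ into $\mu$ via a concentration argument on the random geometric graph. By the Initialization Phase of Algorithm~\ref{alg:DSA-I}, the source $u$ sets the packet counter to $c(x_u)=\lfloor n/d(u) \rfloor$. In the Encoding/Flooding Phase, every time the packet is forwarded to or accepted by a new node along a single branch, the counter is decremented by exactly one, and the packet is discarded the moment $c(x_u)$ hits zero. Hence along one branch the number of hops is deterministically at most $\lfloor n/d(u) \rfloor$, which I would take as the upper bound on the step count for that branch.

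Next I would relate $d(u)$ to the mean degree $\mu$. Under the network model of Section~\ref{sec:model}, the $n$ nodes are deployed uniformly at random in $\mathcal{A}=[L,L]^2$ and an edge is present iff the pairwise distance is at most $r$. Conditioned on the position of $u$, the degree $d(u)$ is a sum of $n-1$ independent Bernoulli indicators with common success probability $p\approx \pi r^2/(4L^2)$ (up to boundary corrections), so $\mathbb{E}[d(u)]=\mu=(n-1)p$. A standard Chernoff bound then yields
\begin{equation}
\Pr\bigl[\,|d(u)-\mu|\geq \delta \mu\,\bigr]\leq 2\exp(-\delta^2 \mu/3)
\end{equation}
for any fixed $\delta\in(0,1)$. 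Because the network is assumed dense (the condition $r\geq O(1/n^2)$ in Section~\ref{sec:model} together with the no-isolated-nodes requirement forces $\mu$ to grow with $n$), a union bound over the $n$ sources gives $d(u)=\Theta(\mu)$ simultaneously for every $u$ with probability $1-o(1)$.

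Combining the two steps, on the high-probability event just described, the per-branch hop count is bounded by $\lfloor n/d(u)\rfloor = O(n/\mu)$ uniformly in $u$, which is exactly the bound claimed. Taking an average over the source $u$ (or equivalently over a uniformly random originator) preserves the same order, since $n/d(u)$ is sandwiched between constant multiples of $n/\mu$ on the good event and the complement has negligible mass.

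The main obstacle I anticipate is the switch from $d(u)$ to $\mu$: Jensen's inequality makes $\mathbb{E}[n/d(u)]\neq n/\mathbb{E}[d(u)]$ in general, so I cannot take expectations naively. The fix is the concentration argument above, but it requires ruling out the low-probability event that some $d(u)$ is atypically small (which would blow up $n/d(u)$). Handling this cleanly means I must argue both that $\mu$ grows fast enough for Chernoff to kick in and that boundary nodes (where the transmission disk is truncated by $\mathcal{A}$) still have degree $\Theta(\mu)$; both reduce to inspecting the geometric parameter $r$ in the model's dense regime.
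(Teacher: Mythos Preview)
Your argument is correct and follows the same two-step skeleton as the paper: read off the branch hop count as $\lfloor n/d(u)\rfloor$ from the counter mechanism, then replace $d(u)$ by $\mu$. The only difference is that the paper simply asserts ``approximating the mean degree of the graph to the degree of any arbitrary node $u$,'' whereas you supply the missing justification via a Chernoff/union-bound concentration argument on the random geometric graph; your version is strictly more rigorous but not a different route.
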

\begin{proof}
Let $u$ be a node originating a packet $packet_u$ and it has degree $d(u)$.
For any arbitrary node $v$, the packet $packet_u$ will be forwarded only if it is the first time to visit $v$ or the counter $c(x_u)\geq 2$.
We know that every packet originated from a node $u$ has a counter given by
\begin{eqnarray}
c(x_u)=\lfloor n/d(u)\rfloor.
\end{eqnarray}
  Let $\mu$ be the mean degree of an abstract graph representing the network $\N$, see Definition~\ref{eq:mu}. On average assuming every packet will be sent to $\mu$ neighboring nodes.
Approximating the mean degree of the graph to the degree of any arbitrary node $u$, the result follows.

\end{proof}

The previous lemma ensures that if $d(u) > n/2$, then the node $u$ will flood its packet only once $c(u)=1$.  In addition, nodes with smaller degrees will require to send their packets using large number of steps.

If the total number of nodes is not known, one can use a random walk initiated by the node $u$ to estimate the total number of nodes. In Section~\ref{sec:algDSA-II}  we will propose different algorithm that does not depend on estimating $n$ or use random walks in a graph.

The following lemma shows the total number of transmissions required to disseminate the information throughout the network.

\begin{lemma}\label{lem:totaltransmissions}
Let $\N$ be an instance model of a wireless sensor network with $n$ sensor nodes. The total number of transmissions required to disseminate the information from any arbitrary node throughout the network is given by
\begin{eqnarray}
O (n).
\end{eqnarray}
\end{lemma}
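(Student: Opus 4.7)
The plan is to reduce the total transmission count to the per-branch count from Lemma~\ref{lem:onepacket} by a simple branching argument. The key structural observation is that in the initialization phase the source $u$ floods its packet to all of its $d(u)$ neighbors simultaneously; thereafter each such neighbor independently advances a single copy of the packet through a forwarding chain governed by the while-loop of Algorithm~\ref{alg:DSA-I}. Hence the dissemination tree rooted at $u$ decomposes into roughly $d(u)\approx \mu$ parallel branches, where $\mu$ is the mean degree of the graph.

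Next I would apply Lemma~\ref{lem:onepacket} to each branch. By that lemma the expected length of a single branch is $O(n/\mu)$, which matches the counter value $\lfloor n/d(u)\rfloor$ that the source embeds in the packet header. Summing across the $\mu$ initial branches then gives
\begin{equation}
\mu \cdot O\!\left(\frac{n}{\mu}\right) \;=\; O(n),
\end{equation}
the claimed transmission bound. The same calculation doubles as a deterministic upper bound via the global ``counter budget'' $d(u)\cdot \lfloor n/d(u) \rfloor \le n$, since every transmission strictly decrements some counter.

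The main obstacle I anticipate is the re-enqueueing rule in Algorithm~\ref{alg:DSA-I}: when a node receives a packet it has already seen and the counter is still positive, the packet is placed back into its forward queue, which in principle could spawn extra branches and break the simple product bound above. To handle this I would argue that on the random geometric graph $G$ with density parameter $r$, the $\mu$ initial branches can be coupled with $\mu$ independent random walks of length $O(n/\mu)$, and a standard birthday-style union bound shows that the probability two such walks intersect within their allotted horizon is $o(1)$. Consequently, duplicate receptions are rare with high probability, the re-enqueueing rule fires only negligibly often, and the $O(n)$ estimate survives unchanged.
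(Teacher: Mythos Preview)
Your proposal is correct and follows essentially the same route as the paper: both arguments multiply the $\approx \mu$ initial branches created by the source's flood by the $O(n/\mu)$ per-branch hop count from Lemma~\ref{lem:onepacket} to obtain $\mu \cdot O(n/\mu)=O(n)$. Your version is actually more careful than the paper's own proof, which neither invokes the deterministic counter-budget bound $d(u)\lfloor n/d(u)\rfloor \le n$ nor addresses the re-enqueueing branching issue you raise.
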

\begin{proof}
Let $d(s_i)$ be the degree (number of neighbors with a direct connection) of a sensor node $s_i$. On average $\mu$ is the mean degree of the set of sensors $S$ approximated to $\frac{1}{n}(\sum_{i}^n d(s_i))$. Every node does flooding that takes $O(1)$ running time to $d(s_i)$ neighbors. In order to disseminate information from a sensor $s_i$, at least $n/\mu$ steps are needed using Lemma~\ref{lem:onepacket}.  Also, every sensor $s_i$ needs to send $\mu$ messages on average to the neighbors. Hence the result follows.

\end{proof}

The following theorem shows the encoding complexity of DSA-I algorithm.
\begin{theorem}
The encoding operations of DSA-I algorithm are the total number of transmissions required to disseminate information sensed by all nodes that is given by
\begin{eqnarray}
O (n^2).
\end{eqnarray}
\end{theorem}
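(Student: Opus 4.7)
The plan is to obtain the $O(n^2)$ bound by aggregating the per-source transmission cost established in Lemma~\ref{lem:totaltransmissions} over all $n$ sensor nodes. Since every sensor $s_i \in S$ acts as a source that initiates its own packet and must flood it through the network, the total encoding cost is simply the sum over all sources of their individual dissemination costs.

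First I would recall, by invoking Lemma~\ref{lem:totaltransmissions}, that for a fixed source node $s_i$ the total number of transmissions needed to propagate $x_{s_i}$ throughout $\mathcal{N}$ is $O(n)$. This already captures both the $n/\mu$ hops along a branch (from Lemma~\ref{lem:onepacket}) and the $\mu$-fan-out at each relay, yielding $(n/\mu)\cdot \mu = O(n)$ transmissions per source in the worst case governed by the counter $c(x_{s_i}) = \lfloor n/d(s_i)\rfloor$.

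Second, I would apply linearity: the encoding phase of DSA-I consists of $n$ independent source-dissemination processes, one for each $s_i \in S$. Because the buffer updates $y_u^+ = y_u^- \oplus x_{s_i}$ and the counter decrements are performed locally and in constant time per received packet, the transmission count dominates the encoding complexity. Summing the per-source cost gives
\begin{equation}
\sum_{i=1}^{n} O(n) \;=\; O(n^2),
\end{equation}
which is the claimed bound.

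The only subtlety, and what I would treat as the main obstacle, is justifying that the $O(n)$-per-source bound from Lemma~\ref{lem:totaltransmissions} composes additively rather than being amplified by interactions between concurrent floods. I would address this by noting that the duplicate-detection rule in DSA-I (a node that has already seen $x_{s_i}$ discards it and does not re-forward) together with the counter $c(x_{s_i})$ ensures that each packet $x_{s_i}$ contributes at most $O(n)$ transmissions regardless of what the other sources are doing; thus the floods do not cross-pollute, and the union-bound-style summation is tight up to constants.
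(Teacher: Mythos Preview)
Your proposal is correct and follows exactly the approach implicit in the paper: invoke Lemma~\ref{lem:totaltransmissions} to get an $O(n)$ transmission cost per source and sum over the $n$ sources to obtain $O(n^2)$. The paper in fact gives no explicit proof for this theorem, treating it as an immediate corollary of Lemma~\ref{lem:totaltransmissions}; your additional remarks about duplicate detection and counter-based termination preventing cross-flood amplification are sound but go beyond what the paper itself argues.
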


\section{DSA-II Algorithm Without Knowing Global Information}\label{sec:algDSA-II}
In   algorithm DSA-I we assumed that the total number of nodes are known in advance for each  sensing storing node in the network. This might not be the case since arbitrary  nodes might join and leave the network at various time due to the fact that they have limited CPU and short life time. Therefore, one needs to design network storage algorithm that does not depend on the value of total number of nodes.

In this section we will develop a distributed storage algorithm (DSA-II) that is totally distributed without knowing global information.  The objective is that  each node $u$ will estimate a value for its counter $c(u)$; the number of steps in which each packet will be disseminated in the network. In DSA-II each node $u$ will first perform an inference phase that will calculate  value of the counter $c(u)$. This can be achieved using the degree of $u$ and the degrees of the neighboring nodes $\N(u)$. We also assume a system parameter $c_u$ that will depend on the network condition and node's degree.

\bigskip

\noindent \textbf{Inference Phase:} Let $u$ be an arbitrary node in a distributed network $\N$. In the inference phase, each node $u$ will dynamically determine  value of the counter $c(u)$. The node $u$ knows its neighbors $\N(u)$. This is achieved in the flooding phase. Furthermore, the node $v$ in $\N(u)$ knows the degrees of  its neighbors.

The inference phase is done dynamically in a sense that every node in the network will separately decide a value for its counter. Nodes with large degrees will have a high chance of forwarding their data throughout the network to a large number of nodes.

Then encoding operations of DSA-II algorithm are similar to DSA-I algorithm except the former utilizes an inference phase, where the number of forwarding steps are predetermined first. Assume $v$ be a node connected to a source node $u$. Let $b_v$ be the degree of a node $v$ without adding  nodes in $\N(u) \cup u$. We can define the counter $c(u)$ as
\begin{eqnarray}
c(u)=c_u\Big\lfloor \frac{1}{d(u)} \sum_{v \in \N(u)} b_v \Big \rfloor
\end{eqnarray}

%
\begin{algorithm}[t!]
\SetLine%
\KwIn{A sensor network $\N$ with $S=\{s_1,\ldots,s_i,\ldots\}$ source nodes,   source packets $x_{s_i},\ldots,x_{s_i},\ldots$.}%
\KwOut{storage buffers $y_1,y_2,\ldots,y_i,\ldots$ for all sensors $S$.} %
\ForEach{node $u$ in $\N$}%
    {
    determine a set of neighbors $\N(u)$ using flooding.\;%
    determine a system parameter $c_u$\;
    }%
    Inference Phase\\
\ForEach{source node $u$ in $\N$}%
    {
    query the neighbors $\N(u)$ of $s_i$ for their degrees.\;
     Let $v\in \N(u)$ and $b_v$ be the v  degree without adding nodes in $\N(u)\cup u$\;
       \If{$d_v=1$}%
       {
       Repeat inference phase at $v$\;
       Repeat until $b_{v'} \neq 1$ for some $v' \in N(v)$\;
       Put $b_v=\sum_{v'}d_{v'}$}
       {}
    $c(u)=c_u\big\lfloor \frac{1}{d(u)}\sum_{v \in \N(u)} b_v \big\rfloor$\;%
    }
\ForEach{source node $s_i$ in $\N$}%
    {
    Generate header of $x_{s_i}$ and $token=0$\;%
       flood $x_{s_i}$ to all $\mathcal{N}(s_i)$ uniformly at random, send $x_{s_i}$ to $u \in \mathcal{N}(s_i)$ \;%

 with probability 1,  $y_u$ = $y_u\oplus x_{s_i}$\;%
 Put $x_{s_i}$ into $u$'s forward queue\;%
    $c(x_{s_i})=c(x_{s_i})-1$\;%
    }%
    \While{source packets remaining}%
{Run the \emph{encoding and flooding phase in DSA-I alg.}\;
}

\mbox{}\\

\caption{ DSA-II Algorithm: Distributed storage algorithm  for a WSN without knowing global information where
the data is disseminated using multicasting and flooding to all
neighbors.} \label{alg:DSA-I}
\end{algorithm}

\bigskip

\noindent \textbf{Encoding and Flooding Phase:}
\begin{itemize}
\item After the inference and initialization phases, every node $u$ receiving the
    $packet_{s_i}$ will accept the data $x_{s_i}$ with probability
    one and will add this data to its buffer $y$.
\begin{eqnarray}
y_u^+=y_u^- \oplus x_{s_i}.
\end{eqnarray}
\item The node $u$ will decrease the counter by one as
    \begin{eqnarray}c(x_{s_i})=c(x_{s_i})-1.\end{eqnarray}
\item The node $u$ will select a set of neighbors that did not
    receiver the message $x_{s_i}$ and it will send this message
    using multicasting.
\item For an arbitrary node $v$ that receives the message from
    $u$, it will check if the $x_{s_i}$ has been received before,
    if yes, then it will discard it. If not, then it will run a
    probability distributed whether to accept or reject it. If
    accepted, then it will add the data to its buffer $
    y_v^+=y_v^- \oplus x_{s_i} $ and will decrease the counter
    $c(x_i)=c(x_i)-1.$

\item The node $v$ will check if the counter is zero, otherwise it will decrease it and send this message to the neighboring nodes that did not receive it.

\end{itemize}

\bigskip

\noindent  \textbf{Storage Phase:} Every node will maintain its own buffer
    by storing a copy of its data and other nodes' data. Also, a node
    will store a list of nodes ID's of the packets that reached it.
    After all nodes receive, send and storage their own and
    neighbors' data, every node will be able to maintain a buffer
    with some data of the network nodes.

\bigskip
\section{DSA-II Analysis}\label{sec:analysis2}
We also shall provide analysis for the DSA-II algorithm shown in the previous section. The main idea is to utilize flooding and the node degree  to disseminate the sensed data from sensors throughout the network. We ensure that nodes with large degree will have  smaller counters in their packets such that their packets will travel for minimal number of hops. Also, nodes with smaller degree will have larger counters such that their packets will travel to many neighbors as possible.

The following lemma establishes  the number of hobs (steps) that every packet will travel in the network. Let $\lambda$ be the average node density~\cite{Pe03}.

\begin{lemma}\label{lem:onepacket2}
On average for a uniformly distributed network,  the total number of steps for one packet originated by a node $u$ in one branch in DSA-II is given by
\begin{eqnarray}
O(\mu-\lambda).
\end{eqnarray}

\end{lemma}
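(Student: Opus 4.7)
The plan is to reduce the lemma to an estimate of the counter $c(u)$ computed in the inference phase of DSA-II, since a packet originated by $u$ is forwarded along any one branch exactly $c(u)$ times (each hop decrements the counter until it reaches zero). Thus the number of steps is, up to constants absorbed in the $O(\cdot)$, equal to
\begin{eqnarray}
c(u)=c_u\Big\lfloor \frac{1}{d(u)}\sum_{v \in \N(u)} b_v \Big\rfloor,
\end{eqnarray}
so the task becomes to show that the bracketed expression is on average $\mu-\lambda$.

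First I would examine a single summand $b_v$. By definition, $b_v$ is the degree of $v$ after removing the neighbors of $v$ that also lie in $\N(u)\cup\{u\}$. Since the network is uniformly and randomly distributed in $\mathcal{A}=[L,L]^2$, the expected total degree of $v$ is $\mu$. The expected number of $v$'s neighbors that lie inside $\N(u)\cup\{u\}$ can be computed from the area of overlap of the two transmission disks of radius $r$ centered at $u$ and $v$, scaled by the node density $\lambda$; by uniformity this contribution is (in expectation) $\lambda$ under the normalization used for $\mu$ and $\lambda$ in the paper. Consequently, $\mathbb{E}[b_v]\approx \mu-\lambda$.

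Next I would average over $v\in\N(u)$. Because $u$ has $d(u)$ neighbors and each neighbor independently (by uniform deployment) satisfies $\mathbb{E}[b_v]\approx\mu-\lambda$, linearity of expectation gives
\begin{eqnarray}
\frac{1}{d(u)}\sum_{v\in\N(u)} b_v \;\approx\; \mu-\lambda.
\end{eqnarray}
Multiplying by the constant system parameter $c_u$ and applying the floor (which changes the value by at most $1$) yields $c(u)=O(\mu-\lambda)$, which is the claimed bound. The recursive fallback in the algorithm for $d_v=1$ only inflates $b_v$ to values that remain $O(\mu)$ under uniform deployment, so it does not affect the asymptotic estimate.

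The main obstacle I expect is the rigorous justification of the approximation $\mathbb{E}[b_v]\approx\mu-\lambda$: it requires an integration of the overlap area of the two radius-$r$ disks around $u$ and $v$, together with an independence argument for the uniform (or Poisson) deployment, while controlling boundary effects near $\partial\mathcal{A}$ and the degenerate case $d(v)=1$ that triggers the recursive inference step. Once these edge cases are handled, the rest of the argument is a direct substitution into the counter formula.
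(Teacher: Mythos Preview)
Your proposal is correct and follows essentially the same route as the paper: both arguments reduce the step count to the counter $c(u)$, approximate $d(u)\approx\mu$ and $b_v\approx\mu-\lambda$, and then substitute to get $\frac{1}{d(u)}\sum_{v\in\N(u)}b_v\approx\mu(\mu-\lambda)/\mu=\mu-\lambda$. In fact you give more justification than the paper does for the key approximation $\mathbb{E}[b_v]\approx\mu-\lambda$, supplying the disk-overlap heuristic where the paper simply asserts the approximation.
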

\begin{proof} 
Let $u$ be a node originating a packet $packet_u$ and it has degree $d(u)$ and when the nodes are uniformly distributed in the network we can approximate $d(u)$ as $\mu$.
We know that every packet originated from a node $u$ has a counter given by
\begin{eqnarray}
c(u)=c_u\Big\lfloor \frac{1}{d(u)}\sum_{v \in \N(u)} b_v\Big\rfloor.
\end{eqnarray}
We ensure that $c_u$ is inversely proportional to node degree so that nodes with small number of neighbors we take large values of $c_u$ and vice versa. Also in case that node $v$ has only one neighbor other than the originating node $u$ we traverse through this node until we get at least one node $v'$ that has degree $b_v' > 1$ .\\
   On average assuming every packet will be sent to $\mu$ neighboring nodes.
We can approximate $b_v$ as $\mu-\lambda$ so we can rewrite the equation  $\sum_{v \in \N(u)} b_v /d(u)$ as ${(\mu) (\mu-\lambda)} /\mu$.
For any arbitrary node $v$, the packet $packet_u$ will be forwarded only if it is the first time to visit $v$ or the counter $c(x_u)\geq 2$.
\end{proof}

The following lemma shows the total number of transmissions required to disseminate the information throughout the network.

\begin{lemma}\label{lem:totaltransmissions2}
Let $\N$ be an instance model of a wireless sensor network with $n$ sensor nodes uniformly distributed. The total number of transmissions required to disseminate the information from any arbitrary node throughout the network is given by
\begin{eqnarray}
O (\mu(\mu-\lambda)).
\end{eqnarray}
\end{lemma}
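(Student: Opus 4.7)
The plan is to mirror the argument of Lemma~\ref{lem:totaltransmissions}, replacing the per-branch hop count $O(n/\mu)$ by the DSA-II analogue $O(\mu-\lambda)$ from Lemma~\ref{lem:onepacket2}, while keeping the per-node fan-out equal to the mean degree $\mu$. I would fix an arbitrary source node $u$ and decompose the cost of disseminating $packet_u$ into the length of one multicast branch times the average fan-out of an internal forwarding node, and then multiply.

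Step (i) is supplied directly by Lemma~\ref{lem:onepacket2}: along any single branch the packet is forwarded for $O(\mu-\lambda)$ hops before the counter $c(u)$ is exhausted. For step (ii), I would invoke the uniform-distribution assumption to approximate the number of neighbors contacted by an internal forwarding node by the mean degree $\mu$, with the parent-node correction already absorbed into $\lambda$. Combining the two factors yields the claimed $O(\mu(\mu-\lambda))$ transmissions per source. Two remarks keep the bookkeeping clean: the accept/reject coin at the receiver affects whether the data is stored, not whether a transmission actually occurred, so every forwarded copy is still counted; and the recursive chain-walk triggered in the inference phase when some $b_v=1$ influences only the value of $c(u)$, it does not inject additional flooding transmissions during the encoding and flooding phase.

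The main obstacle is justifying that the dissemination pattern should be viewed as a bouquet of $O(\mu)$ branches of length $O(\mu-\lambda)$ rather than a full $\mu$-ary tree of depth $O(\mu-\lambda)$, whose size would be exponential. This rests on treating $c(u)$ as a global budget that is decremented on every accepted forward, so that the total number of forwarded copies of $packet_u$ is bounded by a constant multiple of $\mu \cdot c(u)$ no matter how the multicast front splits. Making this budget interpretation precise for the multicast rule stated in the encoding and flooding phase, and verifying that revisits are discarded before consuming further budget, is the delicate step; once it is in hand, the stated transmission bound follows by a single multiplication.
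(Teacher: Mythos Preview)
Your proposal is correct and follows essentially the same route as the paper: invoke Lemma~\ref{lem:onepacket2} for the $O(\mu-\lambda)$ hop count along a branch, approximate the fan-out at the source by the mean degree $\mu$, and multiply. The paper's proof is in fact terser than yours and does not engage with the bouquet-versus-tree issue or the coin/inference bookkeeping you raise; your extra care there goes beyond what the paper supplies.
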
 
\begin{proof}
Let $d(s_i)$ be the degree (number of neighbors with a direct connection) of a sensor node $s_i$. On average $\mu$ is the mean degree of the set of sensors $S$ approximated to $\frac{1}{n}(\sum_{i=1}^n d(s_i))$. Every node does flooding that takes $O(1)$ running time to $d(s_i)$ neighbors. In order to disseminate information from a sensor $s_i$, at least $\mu-\lambda$ steps are needed using Lemma~\ref{lem:onepacket2}.  Also, every sensor $s_i$ needs to send $\mu$ messages on average to the neighbors. Hence the result follows.
\end{proof}

The following theorem shows the encoding complexity of DSA-I algorithm.
\begin{theorem}
The encoding operations of DSA-II algorithm are the total number of transmissions required to disseminate information sensed by all nodes and given by
\begin{eqnarray}
O (\mu(\mu-\lambda) n).
\end{eqnarray}
\end{theorem}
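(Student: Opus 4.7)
The plan is to lift the single-source dissemination bound of Lemma~\ref{lem:totaltransmissions2} to the whole network by a straightforward linearity argument, since the encoding cost of DSA-II is the aggregate of the flooding costs of all $n$ sensors running the same inference-plus-encoding procedure independently.

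First I would fix the convention that an ``encoding operation'' here is counted as a single per-hop transmission of some $packet_{s_i}$, in line with Lemma~\ref{lem:totaltransmissions} and Lemma~\ref{lem:totaltransmissions2} (the XOR update $y_v^{+}=y_v^{-}\oplus x_{s_i}$ at an accepting node is $O(1)$ and therefore absorbed into the transmission cost at $v$). Under this accounting, every transmission in DSA-II is attributable to exactly one originating source $s_i$, so I can partition the total encoding cost as
\begin{equation}
T_{\mathrm{enc}} \;=\; \sum_{i=1}^{n} T(s_i),
\end{equation}
where $T(s_i)$ is the number of transmissions generated while disseminating $packet_{s_i}$.

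Next I would invoke Lemma~\ref{lem:totaltransmissions2} directly on each source: for a uniformly distributed network with mean degree $\mu$ and average node density $\lambda$, the dissemination of a single packet incurs $T(s_i)=O(\mu(\mu-\lambda))$ transmissions, since the counter $c(u)$ derived in the inference phase caps each branch at $O(\mu-\lambda)$ hops and the initial broadcast reaches $\mu$ neighbors on average. Summing over the $n$ sources yields
\begin{equation}
T_{\mathrm{enc}} \;=\; \sum_{i=1}^{n} O\bigl(\mu(\mu-\lambda)\bigr) \;=\; O\bigl(\mu(\mu-\lambda)\,n\bigr),
\end{equation}
which is the claimed bound.

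The only subtle step is justifying that the per-source costs really do add without a hidden blow-up from interaction between concurrent floodings: duplicate receptions of the same $x_{s_i}$ are discarded (the ``if $v$ receives $x_{s_i}$ for the first time'' branch from the DSA-I encoding phase that DSA-II reuses), so no packet is counted twice across branches of the same source, and packets from distinct sources do not inflate each other's hop counts because their counters are independent. I expect this bookkeeping — convincing the reader that the $O(\cdot)$ constants in Lemma~\ref{lem:totaltransmissions2} do not accumulate an extra factor when $n$ sources flood simultaneously — to be the main obstacle, and I would address it by noting that the counter mechanism enforces a deterministic upper bound on $T(s_i)$ independent of what the other sources are doing, so the sum in the display above is an unconditional upper bound rather than an expected one.
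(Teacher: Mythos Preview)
Your proposal is correct and matches the paper's approach: the paper states this theorem without an explicit proof, leaving it as an immediate consequence of Lemma~\ref{lem:totaltransmissions2} by multiplying the single-source cost $O(\mu(\mu-\lambda))$ by the $n$ sources, exactly as you do. Your additional bookkeeping about non-interaction of concurrent floodings and the deterministic counter cap actually goes beyond what the paper supplies, so there is nothing to correct.
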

\bigskip
\section{Performance and Simulation Results}\label{sec:simulation}

In this section we will simulate the distributed storage algorithms, DSA-I and DSA-II,  presented in the previous sections. The main performance metric we investigate is the successful decoding probability versus the decoding ratio.

Let $\rho$ be the successful decoding probability defined as percentage of $M_s$  successful trials for recovering all $n$ variables (symbols) to the total number of trails. Also, let $h$ be the total number of queries needed to recover those $n$ variables. We can define the decoding ratio as the total queried nodes divided by n, i.e. $h/n$.

\vspace{+.1in}
\begin{definition}(Decoding Ratio)
\emph{Decoding ratio} $\eta$ is the ratio between the number of querying nodes
$h$ and the number of sources $n$, i.e.,
\begin{equation}\label{eq:eta}
\eta=\frac{h}{n}.
\end{equation}
\end{definition}
\vspace{+.1in}

\begin{definition}(Successful Decoding Probability)
\emph{Successful decoding probability} $P_s$ is the probability that the $n$
source packets are all recovered from the $h$ querying nodes.
\end{definition}
\vspace{+.1in}

\begin{figure}
\begin{center}
\includegraphics[scale=0.42]{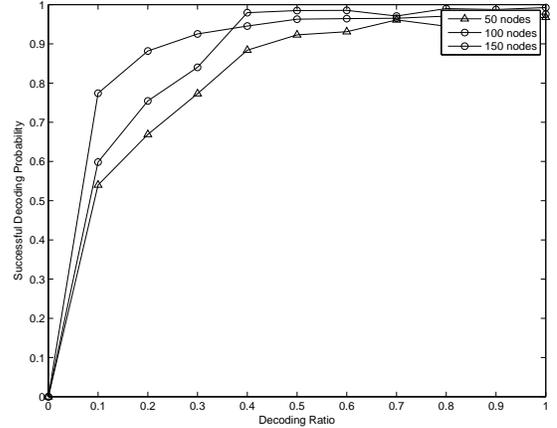}
\caption{A WSN with $n$ nodes arbitrary and randomly distributed in a field. The successful decoding ratio is shown for various values of n=50, 100, 150 with the DSA-I algorithm.}
\label{fig:DSA-I-2X2}
\end{center}
\end{figure}

In our simulation, $P_s$ is evaluated as follows. Suppose the network has n nodes , and we query h
nodes. There are $\binom{n}{h}$ ways to choose such h nodes, we pick a set $S$ of these choices uniformly at random, set $S$ was chosen large enough to give more normal results, So given the set $S$ which is a ratio $0<r\leq1$ of all possible combinations we define  $\mathcal{M}$  as fellow:

\begin{equation}\label{eq:M}
\mathcal{M} = r*\dbinom{n}{h} = r*\frac{n!}{h!(n-h)!}.
\end{equation}

Let $M_s$ be the size of the subset these $M$ choices of h query nodes from which the all $n$ source packets can be recovered. Then, we evaluate the successful decoding probability as

\begin{equation}\label{eq:Ps}
P_s =\frac{M_s}{\mathcal{M} }.
\end{equation}

We ran the experiment over a network with area $A = [L, L]^2$ grid and with different node densities. We evaluated the performance with various decoding ratios depending on the total number of nodes inside the network with incremental $\emph{step} = 0.1$.

For a decoding ratio $\eta$ we select $h$ nodes for our test. So we may have a large number of combinations to choose from, which may get order of $100^{100}$ combinations, So we have to choose a fair portion $r$ of these combinations $N \ll r \ll M$ and average the results over these experiments.

Fig.~\ref{fig:DSA-I-2X2} shows the decoding performance of DSA-I algorithm with Ideal Soliton distribution with small number of nodes.We ran the experiment over a network with area $A = [2, 2]^2$ grid and with a node density $2.5\leq \lambda \leq 12.5$.We evaluated the performance with various decoding ratio $0.1 \leq \eta \leq 1$ with incremental $\emph{step} = 0.1$.

From these results we can see that the successful decoding probability increases as the node density increases while  the decoding ratio $\eta$ is kept  constant.  We can deduce that the successful decoding probability is above $ \% 70 $  when the decoding ratio is about $\% 20--\% 30$. Another observation is that with a node density $\lambda > 8 $, the successful decoding probability $P_s > \% 90$.\\

\begin{figure}
\begin{center}
\includegraphics[scale=0.5]{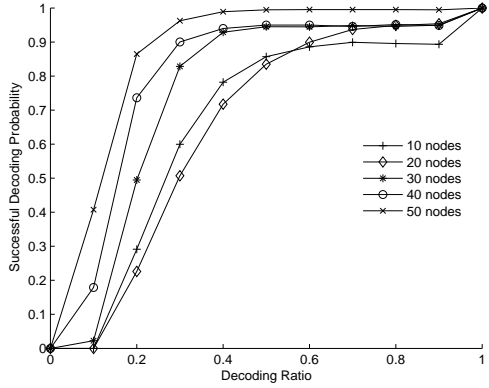}
\caption{A WSN with $n$ nodes arbitrary and randomly distributed in a field. The successful decoding ratio is shown for various values of n=30, 40, 50 with the DSA-II algorithm.}
\label{fig:DSAII2X2}
\end{center}
\end{figure}

Fig.~\ref{fig:DSA-I-5X5} shows the decoding performance of DSA-I algorithm with Ideal Soliton distribution with medium number of nodes. The network is deployed in $A = [5, 5]^2$ with node density $\lambda$ ranges from $4$ to $20$. From the simulation results we can see that the decoding ratio increases with the increase of $\lambda$ and approaches to 1 for $\eta > \%20$ and $\lambda \geq 12$.\\

Fig.~\ref{fig:DSAII2X2} shows the decoding performance of DSA-II algorithm with Ideal Soliton distribution with small number of nodes. We ran the first experiment over a network with area $A = [2, 2]^2$ grid and with a node density $2.5\leq \lambda \leq 12.5$, and evaluated the performance with various decoding ratio $0.1 \leq \eta \leq 1$ with incremental $\emph{step} = 0.1$, As shown in the figure the DSA-II algorithm archived similar results to the DSA-I algorithm with a successful decoding probability $P_s > \% 70 $ for a decoding ratio $\eta \geq 0.4 $.

\begin{figure}[t]
\begin{center}
\includegraphics[scale=0.4]{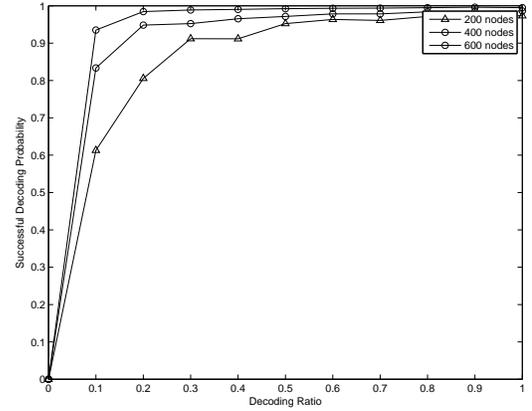}
\caption{A WSN with $n$ nodes arbitrary and randomly distributed in a field. The successful decoding ratio is shown for various values of n= 200, 400, 600 with the DSA-I algorithm.}
\label{fig:DSA-I-5X5}
\end{center}
\end{figure}

Fig.~\ref{fig:avgbuf5x5} shows the a caparison between the buffer size in DSA-I and DSA-II in a network deployed in an area $A = [5, 5]^2$, it can be concluded from the results that the buffer size approximately equals $\%10 $ of the network size $n$. From  Fig.~\ref{fig:avgbuf5x5} it can be seen that the buffer size is strongly related to the network density $\lambda $.

\bigskip

\section{Evaluation and Practical Aspects}
In this section we shall provide evaluation and comparison analysis between DSA-I and DSA-II algorithms and related work in distributed storage algorithms. Previous work focused on utilizing random walks and Fountain codes to disseminate data sensed by a set of sensors throughout the network. Also, global and geographical information such as knowing total number of nodes, routing tables, and node locations are used. In this work we do not assume knowing such global information.

The main goal of this work is to design data collection algorithms that can be utilized in large-scale wireless sensor networks. We achieve this goal by disseminate data throughout the network using data flooding once at every sensor node, then adding some redundancy at other neighboring nodes using random walks and packet trapping. Every storage node will keep track of other node's ID's, from which it will accept/reject packets.

The main advantages of the proposed algorithms are as follows
\begin{compactenum}[i)]
\item
One does not need to query all nodes in the network in order to retrieve information about all n nodes. Only $\%20-\%30 $ of the total nodes can be queried.
\item One can query only one arbitrary node $u$ in a certain region in the network to obtain an information about this region.

\end{compactenum}

\begin{figure}
\begin{center}
\includegraphics[scale=0.55]{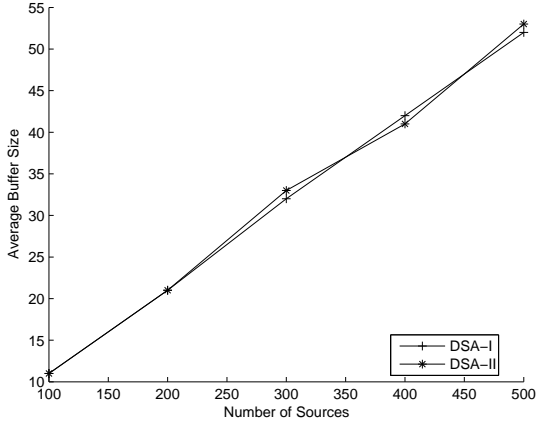}
\caption{A Caparison between DSA-I and DSA-II buffer size for various node densities in a medium size network. Increasing number of sensor nodes increases linearly the number of buffers.}
\label{fig:avgbuf5x5}
\end{center}
\end{figure}

\subsection{Sensing New Data}
The proposed algorithms work also in the case of data update. Assume a node $u$ sensed data $x_u$ and it has been disseminated throughout the network using flooding as shown in DSA-I and DSA-II algorithms. In this case the flag value is set to zero; and a packet from the node $u$ is originated as follows:

\begin{eqnarray}
packet_{u}=(ID_{u},x_{u},c(x_u), flag)
\end{eqnarray}
We notice that every node $v$ stores a copy from this data $x_u$ will also maintain a list of ID's including $ID_u$.

Assume $x_u^{'}$ be the new sensed data from the node $u$. Let us consider the case that the node $u$ wants to update its values, then the node $u$ will send update message setting the flag to one.
\begin{eqnarray}
packet_{u}=(ID_{u} ,x_u^{'}\oplus x_{u},c(x_u), flag).
\end{eqnarray}
The new and old data are Xored in this packet.
Every storage node will check the flag, whether it is an update or initial packet. Also, the node $v$ will check if $ID_u$ is in its own list. Once a node $v$ accepts the coming update packet, it will update its target buffer as
\begin{eqnarray}y_v^+=y_v^- \oplus x_u^{'}\oplus x_{u}.\end{eqnarray}
.

\subsection{Practical Aspects}
The proposed algorithms can be deployed in  large-scale wireless sensor networks, where geographic locations of sensor nodes are not known. Also, each sensor does not need to maintain routing tables about the neighboring nodes. Such applications include WSN's disseminated in forests and burned fields, where monitoring and detecting fires, floods and disasters phenomena  are required. It also can be deployed in crowd large fields, where a large number of nodes is scattered to collection data.

The proposed data collection and storage algorithms certainly are can be deployed in Minna and Arafat fields in the east south of Makkah during pilgrimage. Fig.~\ref{fig:minna2} shows camp tents located in Minna field in east of Makkah. The tents are supported by air-condition, electricity, and gas suppliers. The sensor devices are distributed randomly to measure gas pollution, detect fires, collect data, learn about the environment. The data storage devices receive collected  data by the sensors and send it to the main server for further analysis. More details and practical aspects of this model will be explained in our future work.

\begin{figure*}[t]
\begin{center}
\includegraphics[width=11cm, height=7cm]{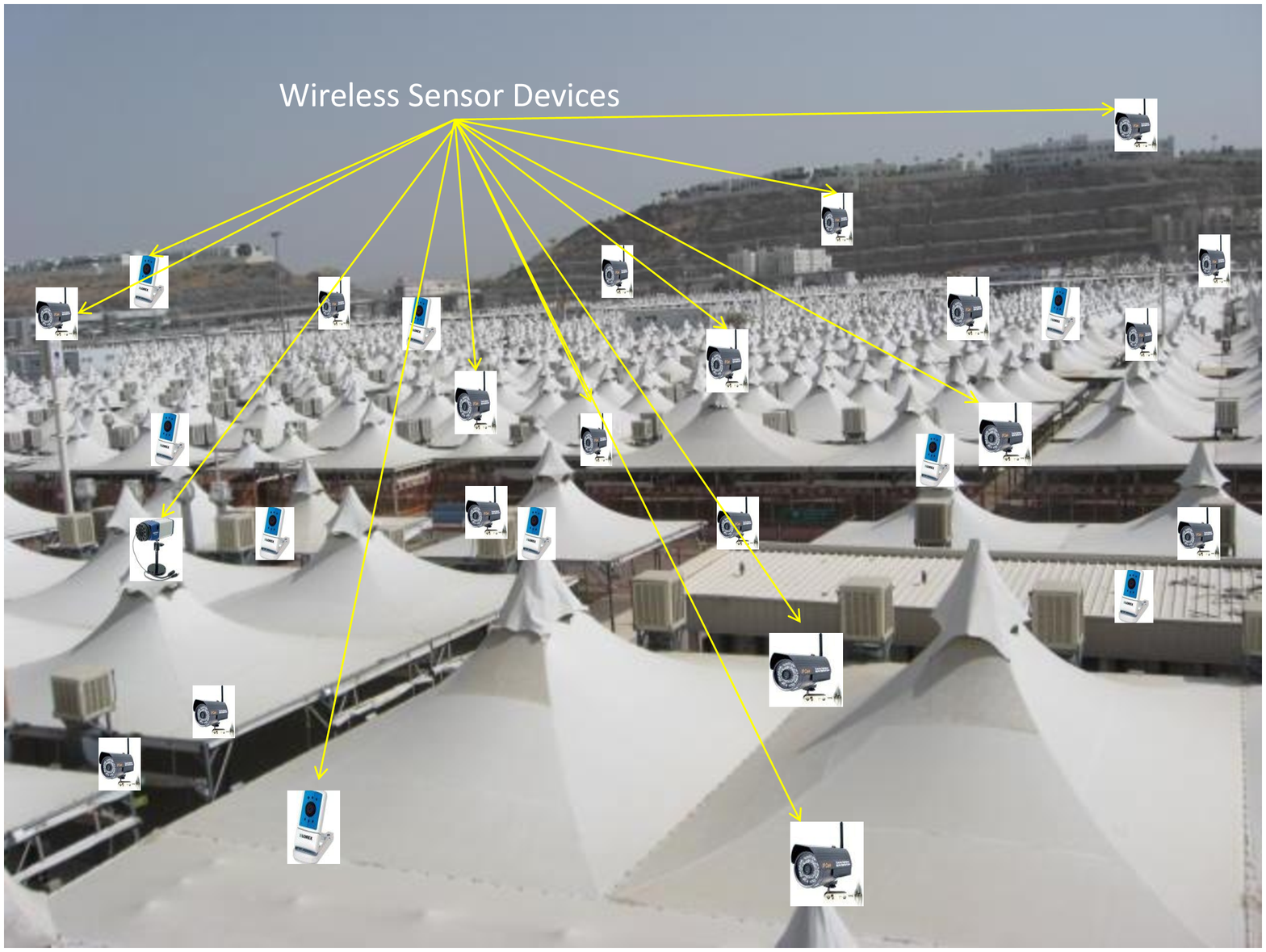}
\includegraphics[width=7cm, height=7cm]{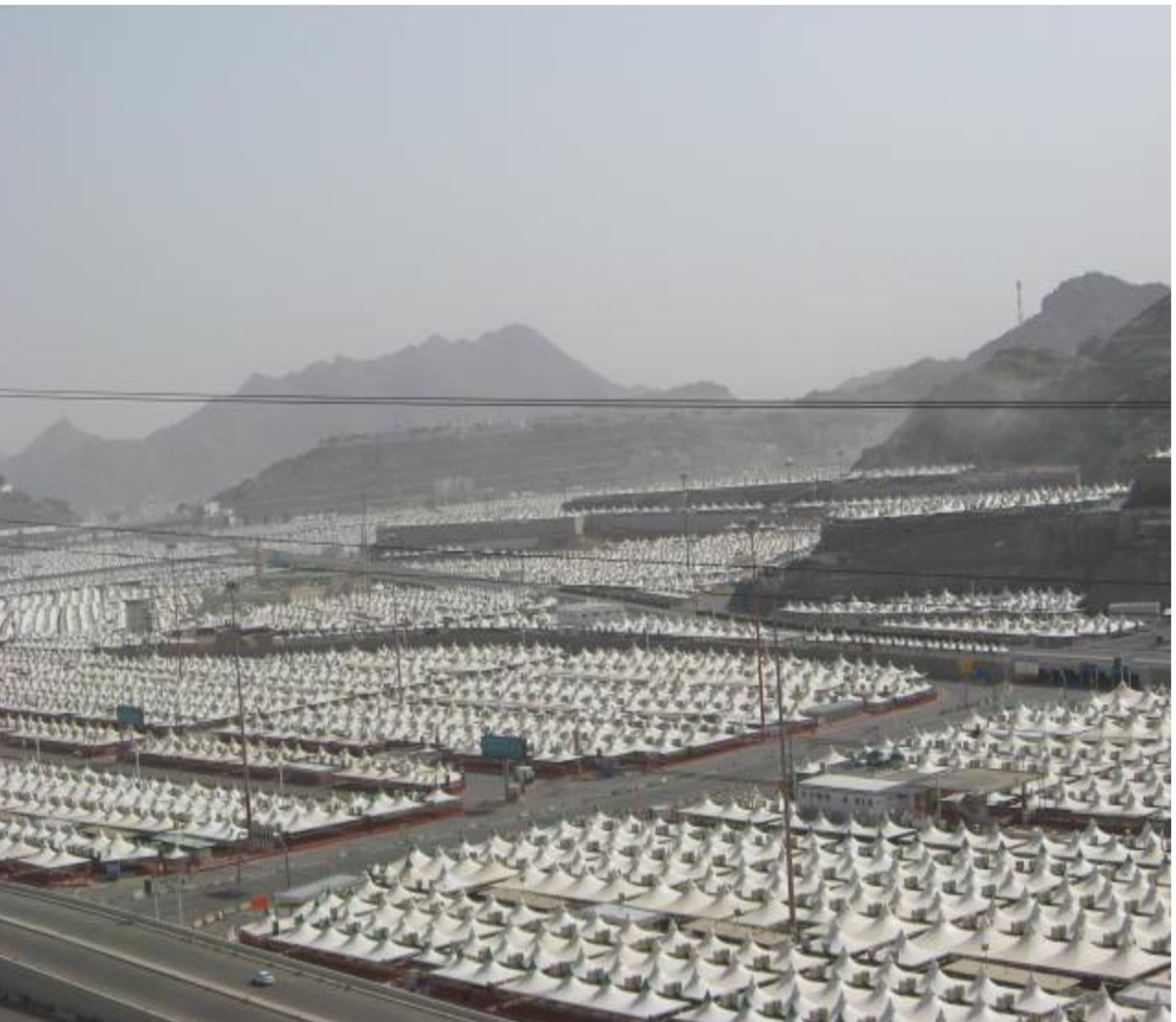}
\caption{Wireless sensor devices are scattered in Minna field in East of Makkah to gather and collect data about the environment. Such sensors are able to detect fires, gas pollution, and other disasters phenomena. They are needed to monitor the large number of camp tents in Minna field.}
\label{fig:minna2}
\end{center}
\end{figure*}

\section{Related Work} \label{sec:relatedwork}
Wireless vision sensor networks are small devices that can be  scattered in a field or deployed in a network to measure certain phenomena.
In this section we present the previous work in network storage codes that is relevant to our work.  Distributed network storage codes such as Fountain codes are used along with random walks to distribute data from a set of sources $k$ to a set of storage nodes $n \gg k$, see~\cite{dimakis06b,aly08e}.
However, in this work we generalize this scenario where  a set of $n$ sources  disseminate their data to a set of $n$ storage nodes .

The most notable work in distributed storage algorithms for wireless sensor networks can be stated as.

\smallskip

\begin{itemize}
\item

Dimakis~\emph{el al.} in~\cite{dimakis06a,dimakis06b,dimakis08} used a decentralized
implementation of Fountain codes that uses geographic routing and every node
has to know its location. The motivation for using Fountain codes instead of
using random linear codes is that Fountain codes need $O(k \log k)$ decoding
complexity but random linear codes and RS codes use $O(k^3)$ decoding
complexity where k is the number of data blocks to be encoded. Also, one does
not know in advance the degree $d$ of the collector nodes~\cite{lin07b}. The authors propose
a randomized algorithm that constructs Fountain codes over grid network using
only geographical knowledge of nodes and local randomized decisions. They also
used fast random walks to disseminate source data to the storage nodes.
\smallskip

\item  Lin \emph{el al.} in~\cite{lin07a,lin07b} studied the
question "how to retrieve historical data that the sensors have gathered even
if some sensors are destroyed or disappeared from the network?"  They analyzed
techniques to increase "persistence" of sensed data in a random wireless sensor
network. They proposed two decentralized algorithms using Fountain codes to
guarantee the persistence and reliability of cached data on unreliable sensors.
They used random walks to disseminate data from a sensor (source) node to a set
of other storage nodes. The first algorithm introduces lower overhead than
naive random-walk, while the second algorithm has lower level of fault
tolerance than the original centralized Fountain code, but consumes much lower
dissemination cost.  They proposed the first novel decentralized implementation
of Fountain codes in sensor networks in an efficient and scalable fashion. The
authors did not use routing tables to dissimilate data from one sensor to a set
of sensors. The reason is that a sensor does not have enough energy or memory
to maintain a routing table which is scalable with the size of the network.

\item
Kamara \emph{el al.} in~\cite{kamra06} proposed a novel technique called
\emph{growth codes} to increase data persistence in wireless sensor networks,
i.e. increasing the amount of information that can be recover at the sink.
\emph{Growth codes} is a linear technique that information is encoded in an
online distributed way with increasing degree. They defined persistence of  a
sensor network as \emph{"the fraction of data generated within the network that
eventually reaches the sink"}~\cite{kamra06}. They showed that \emph{growth
codes} can increase the amount of information that can be recovered at any
storage node at any time period whenever there is a failure in some other
nodes.  They do not use robust or Soliton distributions, however, they propose
a new distribution depending on the network condition to determine degrees of
the storage nodes. The motivation for their work is that
\begin{inparaenum} \item Positions of the nodes are not known, so a sensor node
does not need to know positions of other nodes.  \item They assume a round time
of update the nodes, meaning with increasing the time $t$, degree of a symbol
is increased. This is the idea behind growth degrees. \item They provide
practical implementations of growth codes and compare its performance with
other codes. \item The decoding part is done by querying an arbitrary sink, if
the original sensed data has been collected correctly then finish, otherwise
query another sink node.
\end{inparaenum}

\medskip

\item
The authors in~\emph{el al.} in~\cite{aly08e,aly11a} studied a model for  distributed network storage algorithms for wireless sensor networks where $k$ sensor nodes (sources) want to disseminate their data to $n$ storage nodes with less computational complexity. The authors used Fountain codes and random walks in graphs to solve this problem. They also assumed that the total number or sources and storage nodes are not known. In  other words, they gave an algorithm where every node in a network can estimate the number of sources and the total number of nodes.

\end{itemize}

 In this work we propose a different system for a wireless sensor network where all nodes act as sources as well as storage/receiver nodes. The encoding operations of a node to disseminate its data are linear and take less computational time in comparison to the previous work.

\section{Conclusion}\label{sec:conclusion}
In this work we presented two distributed storage algorithms for large-scale wireless sensor networks.
Given n storage nodes with limited buffers we demonstrated schemes to disseminate sensed data throughout the network with less computational overhead. The results and performance show that it is required to query only $\%20-\%30$ of the network nodes in order to retrieve the data collected by the n sensing nodes, when the buffer size is $\%10$ of the network size. Our future work will include practical and implementation aspects of these algorithms to better serve American-made camp tents in Minna and Arafat fields located in the east south of Makkah, KSA.

\bibliographystyle{plain}

\begin{thebibliography}{30}

\bibitem{aly11a}
S.~A. Aly, A.~Ali-Eldin, and H.~Vincent Poor.
\newblock A distributed data collection algorithm for wireless sensor networks
  with persistent storage nodes.
\newblock In {\em Proc. 4th IFIP International Conference on New Technologies,
  Mobility and Security}, Paris, France, February 7-10, 2011.

\bibitem{aly08e}
S.~A. Aly, Z.~Kong, and E.~Soljanin.
\newblock Fountain codes based distributed storage algorithms for wireless
  sensor networks.
\newblock In {\em Proc. 2008 IEEE/ACM Information Processing of Sensor Networks
  (IPSN'08)}, pages 171--182, St. Louis, MO, USA, April 19-22, 2008.

\bibitem{dimakis08}
A.~G. Dimakis, P.~Godfrey, Y.~Wu, M.~Wainwright, and K.~Ramchandran.
\newblock Network coding for distributed storage systems.
\newblock {\em Submitted to IEEE trans. info. theory}, 2008.

\bibitem{dimakis07}
A.~G. Dimakis, P.~B. Godfrey, M.~Wainwright, and K.~Ramchandran.
\newblock Network coding for peer-to-peer storage.
\newblock In {\em Proc. Twenty-Sixth Annual Joint Conference of the IEEE
  Computer and Communication Societies}, Anchorage, AK, USA, May 6-12, 2007.

\bibitem{dimakis06a}
A.~G. Dimakis, V.~Prabhakaran, and K.~Ramchandran.
\newblock Decentralized erasure codes for distributed networked storage.
\newblock {\em IEEE/ACM Transactions on Networking}, 14(SI):2809 -- 2816, June
  2006.

\bibitem{dimakis05}
A.~G. Dimakis, V.~Prabhakaran, and K.~Ramchandran.
\newblock Ubiquitous access to distributed data in large-scale sensor networks
  through decentralized erasure codes.
\newblock In {\em Proc. 4th IEEE Symposium on Information Processing in Sensor
  Networks (IPSN '05)}, Los Angeles, CA, USA, April, 2005.

\bibitem{dimakis06b}
A.~G. Dimakis, V.~Prabhakaran, and K.~Ramchandran.
\newblock Distributed fountain codes for networked storage.
\newblock In {\em Proc. IEEE International Conference on Acoustics, Speech and
  Signal Processing}, Toulouse, France, May 14-19, 2006.

\bibitem{kamra06}
A.~Kamra, V.~Misra, J.~Feldman, and D.~Rubenstein.
\newblock Growth codes: Maximizing sensor network data persistence.
\newblock In {\em Proc. 2006 Conference on Applications, Technologies,
  Architectures, and Protocols for Computer Communications}, pp 255-266, Pisa,
  Italy, 2006.

\bibitem{lin07b}
Y.~Lin, B.~Li, and B.~Liang.
\newblock Differentiated data persistence with priority random linear code.
\newblock In {\em Proc. 27th International Conference on Distributed Computing
  Systems (ICDCS'07)}, Toronto, Canada, June, 2007.

\bibitem{lin07a}
Y.~Lin, B.~Liang, and B.~Li.
\newblock Data persistence in large-scale sensor networks with decentralized
  fountain codes.
\newblock In {\em Proc. Twenty-Sixth Annual Joint Conference of the IEEE
  Computer and Communication Societies}, Anchorage, AK, USA, May 6-12, 2007.

\bibitem{Pe03}
M.~Penrose.
\newblock {\em Random Geometric Graphs}.
\newblock Oxford University Press, New York, 2003.

\bibitem{stojmenovic05}
I.~Stojmenovic.
\newblock {\em Handbook of Sensor Networks, Algorithms and Architechtrues}.
\newblock Wiley, New York, 2005.

\end{thebibliography}

\section*{Appendix}
\bigskip

Given a network $\N$, the mean degree of a node in $G$ can be defined as:
\goodbreak
\begin{definition}(Node Degree)
Consider a graph $G=(V,E)$, where $V$ and $E$ denote the set of nodes and
links, respectively. Given $u,v\in V$, we say $u$ and $v$ are \emph{adjacent}
(or $u$ is adjacent to $v$, and vice versa) if there exists a link between $u$
and $v$, i.e., $(u,v)\in E$. In this case, we also say that $u$ and $v$ are
\emph{neighbors}. Denote by $\mathcal{N}(u)$ the set of neighbors of a node
$u$. The number of neighbors, with a direct connection, of a node $u$ is called the \emph{node degree} of
$u$, and denoted by $d(u)$, i.e., $|\mathcal{N}(u)|=d(u)$. The \emph{mean
degree} of a graph $G$ is  given by
\begin{equation}\label{eq:mu}
\mu = \frac{1}{|V|}\sum_{u\in G}d(u),
\end{equation}
where $|V|$ is the total number of nodes in $G$.
\end{definition}

The
Ideal Soliton distribution $\Omega_{is}(d)$ for $k$ source blocks is given by
\begin{equation}\label{eq:Ideal-Soliton-distribution}
\Omega_{is}(i)=\Pr(d=i)=\left\{ \begin{array}{ll} \vspace{+.05in} \displaystyle \frac{1}{k}, & i=1\\
\displaystyle\frac{1}{i(i-1)}, & i=2,3,...,k.\end{array}\right.
\end{equation}

Let $R=c_0\ln(k/\delta)\sqrt{k}$, where $c_0$ is a suitable constant and
$0<\delta<1$. The Robust Soliton distribution for $k$ source blocks is defined
as follows. Define
\begin{equation}
\tau(i)=\left\{ \begin{array}{ll} \vspace{+.05in} \displaystyle \frac{R}{ik}, & i=1,...,\displaystyle\frac{k}{R}-1\\
\vspace{+.05in} \displaystyle \frac{R\ln(R/\delta)}{k}, & i=\displaystyle\frac{k}{R}, \\
0, & i=\displaystyle\frac{k}{R}+1,...,k,\end{array}\right.
\end{equation}
and let
\begin{equation}
\beta=\sum_{i=1}^k \tau(i)+\Omega_{is}(i).
\end{equation}
The Robust Soliton distribution is given by
\begin{equation}\label{eq:Robust-Soliton-distribution}
\Omega_{rs}(i)=\frac{\tau(i)+\Omega_{is}(i)}{\beta}, \mbox{ for all }
i=1,2,...,k
\end{equation}

\end{document}